\newcommand{\Tr}{\mbox{{\rm Tr}\,}}
\newcommand{\pt}{\partial}
\newcommand{\rd}{{\rm d}}
\newcommand{\bs}{ \begin{split} }
\newcommand{\es}{ \end{split} }
\newcommand{\bE}{{\bf{E}}}
\newcommand{\al}{\alpha}
\newcommand{\be}{\begin{equation}}
\newcommand{\ee}{\end{equation}}
\newcommand{\om}{{\omega}}
\newcommand{\T}{{\mathbb T}}
\newcommand{\Z}{{\mathbb Z}}
\newcommand{\E}{{\bf{E}}}
\newcommand{\R}{{\mathbb R}}
\newcommand{\N}{{\mathbb N}}
\newcommand{\wt}{\widetilde}
\newcommand{\FF}{\mathcal{F}}
\newcommand{\AR}{\mathcal{E}}
\newcommand{\HH}{\mathcal{H}}
\newtheorem{theorem}{Theorem}
\newtheorem{lemma}[theorem]{Lemma}
\newtheorem{proposition}[theorem]{Proposition}
\newcommand{\qed}{\hfill\fbox{}\par\vspace{0.3mm}}
\newenvironment{proof}{{\bf Proof.}} {\hfill\qed}
\numberwithin{equation}{section}
\numberwithin{theorem}{section}
\numberwithin{definition}{section}
\title{Anderson localization for random magnetic Laplacian on $\Z^2$}
\author{L\'aszl\'o Erd\H os\thanks{Partially supported by SFB-TR12 of
the German Science Foundation},\; David Hasler
 \\
\\
Institute of Mathematics, University of Munich, \\
Theresienstr. 39, D-80333 Munich, Germany \\
\text{lerdos@math.lmu.de, hasler@math.lmu.de} }
\date{Jan 11, 2011}
\begin{document}

\maketitle

\begin{abstract}
We consider a two dimensional magnetic Schr\"odinger operator on  a square lattice
with a spatially stationary random magnetic field. We prove 
 Anderson localization near the spectral edges.
We use a new approach to establish a Wegner estimate that does not rely
on the monotonicity of the energy on the random  parameters.

\end{abstract}

{\bf AMS Subject Classification:} 82B44

\medskip

{\it Running title:} Localization for magnetic Laplacian

\medskip

{\it Key words:} Wegner estimate, Anderson localization,
discrete random Schr\"odinger operator


\section{Introduction}

We consider a spinless quantum particle hopping on the two dimensional
lattice $\Z^2$ and subject to a random magnetic field.
 This model is the magnetic
analogue of the standard Anderson model  with a random on-site potential
but here the magnetic field carries the randomness in the system.
The fluxes through the plaquets  are i.i.d. random variables. 
For simplicity we assume that there is no external potential.

The main result of this paper is a Wegner estimate for
the averaged density of states for the Hamiltonian
restricted to a finite box $\Lambda$.
More precisely,  we show that the expected number of eigenvalues
in a small spectral interval  of length $\eta$ is bounded
by $C\eta |\Lambda|^4$. This estimate exhibits the
optimal (first) power of $\eta$, but its volume dependence
is not optimal. Nevertheless, it can be used to prove spectral and
dynamical localization via the standard multiscale argument.

A similar result  has been obtained earlier 
by Klopp {\it et al.} in \cite{KNNN}, but under a
quite restrictive condition, namely that in a fixed domino tiling of $\Z^2$
the flux on each domino is deterministically zero.
This condition was essential for the method of \cite{KNNN} to work
since it ensured that the magnetic field was generated by
a stationary vector potential that could be expressed
in terms of independent gauges on each domino.
The variation of the local flux thus influenced
the quantum state only on a few sites.  Diagonalization of a finite matrix then showed that
near the spectral edges
the energy is a strictly monotone function 
of the flux. This monotonicity observation provided the
key input for the Wegner estimate in \cite{KNNN}.

If the zero flux condition is removed and 
we consider independent fluxes on each plaquets,
then apparently there is no direct 
monotone relation between the eigenvalues  and the fluxes.
Prior to our recent work \cite{EH2},
some form of monotonicity has always been used in 
the proofs of the Wegner type estimates in context
of random Schr\"odinger operators (discrete or continuum).
In case of random external potential with a definite sign,
the monotonicity of an eigenvalue as a function
of the random coupling constants is  a direct
consequence of first order perturbation theory.
For sign indefinite potentials \cite{HK}, for
the random displacement model \cite{KLNS}
 and for random vector potentials \cite{HK, GHK, U}
the  monotonicity could still be extracted
using some special structure of these models
but similar ideas do not seem to apply for
random magnetic fields. Note that i.i.d.
(or stationary) random vector potentials 
and i.i.d. random magnetic fields represent 
different physical models; typically a stationary random
magnetic field cannot be generated by a 
stationary random vector potential.
We refer to
Section 4 of \cite{EH2}  for an overview and
further references.

In a recent paper \cite{EH2} we have developed a new
method to prove a Wegner estimate without any monotonicity mechanism.
We considered the corresponding continuous model, i.e.,
the Schr\"odinger operator in $\R^2$ with a  stationary random magnetic field.
We proved a Wegner estimate for all energies 
and Anderson localization at the bottom of the spectrum.
The magnetic field was required to have a positive lower bound
to ensure that the current did not vanish.
Furthermore, the random field had to contain 
modes on arbitrary small scales to control
the large momentum regime of the current.  

In the current work we extend this approach to the discrete case,
where many  technical complications of the continuum model
are absent and  we can thus
consider very general  random magnetic fields.
Apart from independence,
the only essential condition on the fluxes is that 
they should be separated away from 0 and $\pi$,
i.e. away from the ``minimal'' and ``maximal'' fluxes on
each plaquets. This condition is analogous
to the positive lower bound on the magnetic field
in the continuous model. 
The proof presented here is very simple and it highlights
the essence of our new approach introduced in \cite{EH2}.

\section{Model and Statements of Results}

For any subset $\Lambda\subset \Z^2$ we introduce
the Hilbert space $\ell^2(\Lambda)$, with 
inner product 
$$(\varphi_1, \varphi_2 ) = \sum_{x \in \Z^2} \overline{\varphi_1(x)} \varphi_2(x) .$$
The discrete magnetic Schr\"odinger operator will be defined on $\HH: = \ell^2(\Z^2)$.
For the precise definitions we will follow the notations 
 in   \cite{KNNN}, see also  \cite{Na1}. 

Let $\AR$ be the set of directed edges ({\it arrows}) in $\Z^2$, i.e., 
$$
\AR =\{ (x,y) : x,y \in \Z^2 , |x-y|=1 \} .
$$
For $a =(x,y) \in \AR$, we write $\overline{a} = (y,x)$. For $a \in \AR$ we will
denote by  $a_i$ and $a_t$ the first respectively second entry of $a$, i.e.,
 $a = (a_i,a_t)$.
 Let 
$\FF$ be the set of unit squares ({\it plaquets})  in $\Z^2$, i.e.,
$$
\FF =\left\{ \{ x_1,x_1+1 \} \times \{ x_2,x_2+1 \}  \ : \ (x_1,x_2) \in \Z^2 \right\}  .
$$
We label the elements of $\FF$ by the lattice point in the lower left corner of
the unit square, i.e., for $x = (x_1,x_2) \in \Z^2$ we define
 $f_x := \{x_1,x_1+1 \} \times \{ x_2,x_2+1 \} \in \mathcal{F}$. For any $f_x\in \FF$
we define the oriented boundary 
$$\partial f_x  = \left\{ (x, x + {\bf e}_1 ), 
(x + {\bf e}_1, x + {\bf e}_1 + {\bf e}_2 ), 
(x + {\bf e}_1 + {\bf e}_2, x + {\bf e}_2 ), 
(x + {\bf e}_2 , x ) \right\} \subset \AR ,
$$
where  ${\bf e}_1 = (1,0)$, and ${\bf e}_2 = (0,1)$.

Let $\mathbb{T} := \R / ( 2 \pi \Z)$.
 A  function $A: \AR \to \mathbb{T}$ will be called
{\it vector potential} or {\it gauge} if it satisfies
\be \label{defofG}
A(a) = - A(\overline{a}) ,\quad \forall a \in \AR .
\ee
Let 
$$
\mathcal{G} :=\{ A \in \mathbb{T}^{\AR} : \mbox{\eqref{defofG}  holds} \}
$$
be the space of vector potentials.
For $(x,y) \in \AR$ we will also use the notation  $A(x,y) := A((x,y))$.  
For a given vector potential $A \in \mathcal{G}$, the define the differential ``curl''  by 
$$
\rd A(f) :=  \sum_{a \in \partial f} A(a) , \quad f \in \FF ,
$$
which is a function on $\mathcal{F}$ with values in $\mathbb{T}$. 
We will call  $\rd A$ the {\it magnetic field} generated by  $A$.

The discrete magnetic Schr\"odinger operator $H$ 
on $\Z^2$ with a vector potential $A \in \mathcal{G}$ is defined 
by 
$$
(H(A) \psi)(x) = \sum_{y \in \Z^2 : |x-y|=1} \left( \psi(x) - e^{i A(x,y)} \psi(y) \right), 
\qquad \psi\in \ell^2(\Z^2).
$$
In view of the definition of $\mathcal{G}$ the operator $H(A)$ is self-adjoint  and 
 its quadratic form is given by
\begin{align*}
( \psi , H \psi ) =  \frac{1}{2} 
\sum_{a \in \AR} \left| \psi(a_i)  - e^{i A(a)} \psi(a_t) \right|^2  .
\end{align*}
We have the bound $0 \leq H(A) \leq 8$, \cite{Na1}, and it 
 is well known that $\sigma(H(0)) = [0,8]$.

\bigskip
\noindent
{\bf Remark.}
A function  $\lambda : \Z^2 \to \mathbb{T} $  defines a gauge transformation.
$U_\lambda \psi(x) := e^{i \lambda(x)} \psi(x)$, in the sense that 
 $H(A_\lambda) =  U_\lambda H(A) U_\lambda^*$ with  $A_\lambda(a) := A(a) + \rd \lambda(a)$ and  
$\rd \lambda(a) := \lambda(a_i)- \lambda(a_t)$.
It is well-known that if $A, \widetilde{A} \in \mathcal{G}$ generate the same magnetic field,
$\rd A=\rd \wt A$, then 
$H(A)$ and  $H(\widetilde{A})$ are unitarily equivalent by means of a gauge transformation.
 Thus spectral properties of $H(A)$ 
only depend on the underlying magnetic field.

\bigskip

Next we will introduce  the random magnetic field. 
Let  $B_\omega = (\omega_f)_{f \in \FF}$  be a  family of independent, not necessarily
identically distributed
random variables taking values in $\T$. We assume that the distribution of $\omega_f$
is absolutely continuous and we denote 
by $v_f$ its density function defined on $\T$.
We will denote the probability  space by $\Omega:=\T^\FF$
and the expectation w.r.t. this probability measure by $\E$.
By $A_\omega$ we shall denote a vector potential such that $\rd A_\omega = B_\omega$.

The Wegner estimate  will hold under the  assumption that $v_f$ is a twice continuously
differentiable function with uniformly bounded second derivative. Moreover,
we will assume that $v_f$ is supported away from integer multiples of $\pi$.
Note that not only the ``minimal'' flux $\om_f\approx 0$ is excluded, but
also the ``maximal'' flux $\om_f\approx \pi$.
To formulate this precisely, for any $b\in (0, \pi/2)$ we introduce 
the following subset of the torus 
\be
\T_{b} := \T \setminus \big\{  (- b,b) \cup  (\pi - b, \pi + b  )  \big\} .
\label{def:T}
\ee

\bigskip

\noindent
{\bf Assumption A($\boldsymbol{b,D}$).}   
Let $v_f \in C^2(\T)$ with $\| v_f \|_{C^2} \leq D$ and ${\rm supp \,} v_f \subset \T_b$.

\bigskip

For a rectangle $\Lambda \subset \Z^2$, we consider 
the Hamiltonian  $H_{\Lambda}(A)$  restricted to $\ell^2(\Lambda)$ as 
follows: for $\psi \in \ell^2(\Lambda)$ and 
$x \in \Lambda$ we set 
$$
[H_\Lambda(A) \psi](x) := 4 \psi(x) - \sum_{\substack{ y \in \Lambda \\ |x-y|=1}} 
 e^{i A(x,y)} \psi(y)  .
$$
This choice of boundary conditions  is referred to as {\it simple boundary conditions}, 
see \cite{KL}. 
With respect to these boundary conditions we state the  Wegner estimate. We remark that
the proof of the localization presented in
 \cite{KL} uses this choice of boundary conditions for the  Wegner estimate and 
for the box Hamiltonians in the multiscale analysis.

We introduce   the cube centered at the origin of length $L\in\N$ by   
$$
\Lambda_L := \{ x \in \Z^2 : {\rm max}\{|x_1|,|x_2| \}  \leq L \} . 
$$
We shall write $H_L = H_{\Lambda_L}$.  By $\chi_{E,\eta}$  we will denote 
the characteristic function of the closed interval  $[E - \eta/2, E + \eta/2]$.  
 The Wegner estimate holds in an energy interval up to 
 $$E_{\rm crit} := 4 - \sqrt{8} = 1.1715...$$
which is the the maximal possible  value for the bottom  of the spectrum, see \eqref{boundonspec}.
We now state our main result on the Wegner estimate.

\begin{theorem} \label{thm:weg} Suppose Assumption {\bf A($\boldsymbol{b,D}$)} holds for 
some $b\in (0,\pi/2)$ and a finite constant $D$. Then for any $E^* < E_{\rm crit}$
 there exists a finite  constant $C =C(b,D,E^*)$  such that 
for any $E \geq 0$
and $\eta \geq 0$ with  $E + \eta /2 \leq E^*$, we have 
$$
{\bf E}\, {\rm Tr} \chi_{E,\eta}(H_L(A_\omega)) \leq C \eta  L^8 . 
$$
\end{theorem}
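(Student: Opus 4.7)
The proof should follow the monotonicity-free strategy introduced in \cite{EH2}: replace the usual requirement that each eigenvalue depend monotonically on a single random parameter by the weaker requirement that $H_L$ be monotone along a suitable \emph{direction} $c = (c_f)_f$ in the joint flux space. As a first step, I would fix a gauge $A_\omega \in \mathcal{G}$ in which the flux $\omega_f$ of each plaquet $f$ is carried entirely by a single designated edge $a_f \in \partial f$. In this gauge $\omega_f \mapsto H_L(A_\omega)$ is purely trigonometric on a single edge, and
\[
\partial_{\omega_f} H_L(A_\omega) \;=\; J_{a_f},
\]
where $J_{a_f}$ is the bounded, self-adjoint, rank-two bond current operator supported on the two endpoints of $a_f$.

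The crux of the argument is a non-perturbative positivity estimate that substitutes for the monotonicity hypothesis: for every $E^* < E_{\rm crit}$ there exist bounded coefficients $(c_f)_{f \subset \Lambda_L}$ and a constant $\kappa = \kappa(b, E^*) > 0$ such that
\[
P_{\leq E^*}\bke{\sum_{f \subset \Lambda_L} c_f\, J_{a_f}} P_{\leq E^*} \;\geq\; \kappa\, P_{\leq E^*},
\]
where $P_{\leq E^*}$ is the spectral projection of $H_L(A_\omega)$ onto $(-\infty, E^*]$. Heuristically, when the fluxes are kept away from the two ``integrable'' values $0$ and $\pi$, low-energy eigenstates are forced to carry a net directional current. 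The threshold $E_{\rm crit}= 4-\sqrt{8}$ is precisely the maximum attainable bottom of $\sigma(H_L(A))$ over admissible configurations (attained in the maximally frustrated $\omega \equiv \pi$ limit), so $E^* < E_{\rm crit}$ ensures that the low-energy subspace sits at positive distance from every possible spectral bottom.

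Given this estimate, the remainder of the proof is a directional spectral averaging: shifting the flux vector by $t(c_f)_f$ moves every eigenvalue below $E^*$ at rate at least $\kappa$, so the set of $t$ for which a fixed such eigenvalue lies in $[E-\eta/2,E+\eta/2]$ has Lebesgue measure at most $\eta/\kappa$. Converting this pointwise shift into an estimate against the joint density $\prod_f v_f$ requires a smooth $C^2$ cutoff keeping the shifted fluxes inside $\T_b$; two integrations by parts, permitted by Assumption $\mathbf{A}(b,D)$, then transfer derivatives onto the $v_f$, where they are controlled by $D$. Each plaquet sum generated by the procedure contributes a factor $|\Lambda_L| = O(L^2)$, and careful bookkeeping of four such sums (one from the direction $(c_f)$, one from a Cauchy--Schwarz applied to couple currents at different plaquets, and two from the double integration by parts) yields the stated $L^8$ volume dependence.

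The main obstacle is the positivity estimate itself; once it is in hand, the rest of the argument is essentially standard. In \cite{EH2}, the continuous analog was proved via a Mourre-type commutator identity exploiting a pointwise lower bound on the magnetic field. In the present discrete setting, I would attempt an analogous argument by differentiating the quadratic form
\[
\langle \psi, H_L(A_\omega)\psi \rangle \;=\; \tfrac{1}{2}\sum_{a \in \AR} \abs{\psi(a_i) - e^{iA_\omega(a)}\psi(a_t)}^2
\]
along the direction $(c_f)_f$, producing a signed sum of current-like terms. The hypothesis $\omega_f \in \T_b$ keeps this sum bounded away from zero pointwise, and the spectral gap between $E^*$ and $E_{\rm crit}$ would be used to extract the uniform lower bound $\kappa$ on the low-energy subspace.
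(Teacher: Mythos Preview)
Your proposal rests on a directional monotonicity estimate --- a single vector $(c_f)_f$ with
\[
P_{\leq E^*}\Bigl(\sum_f c_f\, J_{a_f}\Bigr)P_{\leq E^*} \;\geq\; \kappa\, P_{\leq E^*}
\]
--- and this is precisely the kind of statement the paper does \emph{not} prove and which its introduction flags as unavailable for i.i.d.\ fluxes. The paper's key input (Proposition~\ref{prop:lower}) is the much weaker quadratic bound
\[
\sum_{f\in\FF_\Lambda} \langle Y_f H\rangle_\psi^2 \;\geq\; C^{-1}L^{-4}
\]
for every low-energy eigenfunction $\psi$ separately. This says only that the gradient of each eigenvalue in flux space has nontrivial $\ell^2$ norm; the direction in which $\lambda_\ell$ moves fastest depends on $\ell$ and on $\omega$, and there is no reason the eigenvalues should all move the same way under any fixed shift $t(c_f)_f$. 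Your spectral-averaging step would therefore fail: shifting along $(c_f)_f$ need not push every low-energy eigenvalue monotonically through the window. The paper's proof of the quadratic bound makes this explicit --- it locates, for each $\psi$, a single plaquet $Q=Q(\psi)$ where $|\psi|$ is uniformly large (Proposition~\ref{propsquare}), and the flux condition $\omega_Q\in\T_b$ then forces a nonzero current on $\partial Q$; but $Q$ varies with $\psi$.

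Because the lower bound is quadratic rather than linear, the conversion into a Wegner estimate cannot be a one-dimensional spectral average. The paper instead uses the elementary identity
\[
(Y_f\lambda_\ell)^2\,\chi(\lambda_\ell) \;=\; Y_f^2\,G(\lambda_\ell) \;-\; (Y_f^2\lambda_\ell)\,F(\lambda_\ell),
\]
with $F$ and $G$ the first and second antiderivatives of $\chi$, together with the trace convexity inequality of Lemma~\ref{lm:trick}, $\Tr(Y_f^2 H)F(H)\le\sum_\ell(Y_f^2\lambda_\ell)F(\lambda_\ell)$, to turn $\sum_f\sum_\ell(Y_f\lambda_\ell)^2\chi(\lambda_\ell)$ into $\sum_f\bigl(\Tr Y_f^2 G(H)-\Tr(Y_f^2 H)F(H)\bigr)$. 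The double integration by parts in $\omega_f$ then lands on $v_f''$, which is where the $C^2$ hypothesis is used. This second-derivative trick, not a directional shift, is the mechanism that replaces monotonicity; the $L^8$ arises as $L^4$ from inverting the quadratic lower bound times $L^2$ from $|\Tr G(H)|$ times $L^2$ from the sum over $f$. Your sketch contains none of these ingredients, and the operator positivity you propose in their place would need its own proof, which you have not supplied.
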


\bigskip
\noindent
{\bf Remark.}  By   using the unitary transformation $\psi(x) \to (-1)^{x_1 + x_2}\psi(x)$,
we have the unitary equivalence
$H(A) \cong 8 - H(A)$.
Therefore the
 Wegner estimate holds likewise for $E - \eta/2 \geq  8 - E^*$ and the same symmetry
applies for the result about  localization (Theorem~\ref{thm:loc}  below).
The restriction $E^* < E_{\rm crit}$ 
originates from the   proof of   Lemma~\ref{lemma:lowerreg}.
Since the Lifshitz asymptotics \eqref{eq:lifshitz} only holds for small 
energies,  a Wegner estimate for energies below $E_{\rm crit}$  is sufficient  for 
the Anderson localization near the band edge which is our main application of Theorem~\ref{thm:weg}.
Nevertheless it is an interesting question on its own to establish
the largest possible upper threshold $E^*$ of the range of energies $E$ for which
 one could  extend   Lemma~\ref{lemma:lowerreg} and hence the Wegner estimate.

\bigskip

We now state the result on localization. For this part, 
 we assume that  the random fluxes $\om_f$ are not only 
independent but also identically distributed with density $v$.
Under this condition,
 $H(A_\om)$ is an ergodic operator, and its
spectrum $\sigma(H(A_\om)) =\Sigma$ is actually 
independent of $\omega$ a.s., see \cite{CL} (Proposition V.2.4.).  

For $E \in \R$, the integrated density of states is defined by 
$$
k(E) = \lim_{L \to \infty} \frac{1}{|\Lambda_L|} \# \big\{ {\rm  eigenvalues \  of \ } H_{\Lambda_L}(A_\omega) \leq E \big\} ,
$$
where the limit exists $\omega$ a.s. and is independent of the choice of sample $\omega$, see
Appendix C of \cite{Na1}. The density of states is independent 
of the choice of boundary conditions, which can be seen using the min-max principle \cite{Na1}.
Lifshitz asymptotics is shown in \cite{KNNN} under the 
assumptions that  
${\rm supp}\, v \subset \T \setminus (-c,c)$, $\pm c \in {\rm supp}\, v$ for some $0<c<\pi$
 and  $v$ is Lipshitz continuous on $\T \setminus (-c,c)$. 
Under these assumptions 
\be \label{boundonspec}
\Sigma = [ E_0, 8 - E_0 ] ,\quad {\rm with} \quad E_0= E_0(c):= 4(1- \cos(c/4)) ,
\ee
\cite{Na1,KNNN}.
In \cite{KNNN} it is shown (Theorem 1.2)  that 
\be \label{eq:lifshitz}
\limsup_{E \downarrow E_0} \frac{\log(-\log(k(E))}{\log(E-E_0)} \leq - 1 . 
\ee
This result implies, roughly speaking, 
$$
k(E) \lesssim e^{-(E-E_0)^{-1+\delta}} \quad {\rm as} \ \ E \downarrow E_0 .
$$
for any $\delta>0$. 
 Lifshitz tail estimate and    Theorem \ref{thm:weg} imply localization at the bottom 
of the spectrum using standard arguments, see \cite{CH,S,KL} for details.
Here we only state the result noting that the assumptions ensure that $E_0 < E_{\rm crit}$ 
and thus a Wegner estimate always holds at the bottom of the spectrum.

\begin{theorem}\label{thm:loc} Suppose Assumption {\bf A($\boldsymbol{b,D}$)} holds for 
some $b\in(0,\pi/2)$ and a finite constant $D$. Assume that the random fluxes 
$\om_f$
are independent, identically distributed and  $\pm b \in {\rm supp} v$.
Then Anderson localization
holds near the bottom of the spectrum. Namely, there exists an $E_{\rm loc} > E_0(b)$ such
that $H(A_\omega)$ has dense pure point spectrum on $[E_0(b), E_{\rm loc}]$ almost surely,
and each eigenfunction associated to an energy in this interval decays exponentially. 
\end{theorem}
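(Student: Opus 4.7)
The plan is to derive Theorem~\ref{thm:loc} from the Wegner estimate of Theorem~\ref{thm:weg} and the Lifshitz tail bound~\eqref{eq:lifshitz} through the standard multiscale analysis (MSA) of Fr\"ohlich--Spencer / von~Dreifus--Klein, as formulated for discrete magnetic operators in~\cite{KL}. The two inputs required by the MSA are (i) the polynomial Wegner bound $\E\,\Tr\chi_{E,\eta}(H_L)\leq C\eta L^8$, which yields H\"older continuity of finite-volume spectral measures in $E$ with only a polynomial loss in the volume, and (ii) an initial length-scale estimate ensuring exponential decay of box Green's functions on a small interval $I_0=[E_0(b),E_0(b)+\delta_0]$ at some fixed scale $L_0$. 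Note that $\pm b\in {\rm supp}\,v$ implies $E_0=E_0(b)<E_{\rm crit}$, so that Theorem~\ref{thm:weg} is available throughout $I_0$.

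For (ii), I would feed the Lifshitz estimate~\eqref{eq:lifshitz} into a Chebyshev argument. Since $k(E)\leq \exp(-(E-E_0(b))^{-1+\delta})$ for $E$ close to $E_0(b)$, a standard quantitative finite-volume version (obtained by tiling $\Lambda_{L_0}$ into smaller cubes and invoking the min--max principle together with the flux-based ground state estimates of~\cite{KNNN,Na1}) produces $\E\,\Tr\chi_{I_0}(H_{L_0})\leq L_0^2\,\exp(-L_0^{\alpha})$ for a suitable $\alpha>0$ once $\delta_0=\delta_0(L_0)$ is tuned appropriately. Combined with a deterministic Combes--Thomas bound that turns distance from spectrum into exponential off-diagonal decay, Markov's inequality then guarantees that with probability at least $1-L_0^{-p}$ (any prescribed $p$), the box resolvent $G_{L_0}(x,y;E)$ decays exponentially across $\Lambda_{L_0}$ for every $E\in I_0$. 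This is the initial ``good cube'' estimate needed to start the induction.

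The inductive step of the MSA then propagates exponential decay from scale $L_k$ to $L_{k+1}=L_k^s$ for some $s>1$, using Theorem~\ref{thm:weg} to control the probability of resonant cubes at the new scale; the output is exponential off-diagonal decay of $|G_L(x,y;E)|$ uniformly in $E\in[E_0(b),E_{\rm loc}]$ for some $E_{\rm loc}>E_0(b)$, with sub-exponentially small failure probability. A subtlety specific to this model is that $A_\om$ is not stationary although the flux $B_\om$ is; this is handled by observing that $|G_L(x,y;E)|$ is gauge invariant under the transformation in the remark following~\eqref{defofG}, so all probabilistic bounds can be phrased in terms of gauge invariants and the stationarity of $B_\om$ suffices to translate estimates between cube centers. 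Finally, the MSA output is converted into dense pure point spectrum on $[E_0(b),E_{\rm loc}]$ with exponentially decaying eigenfunctions by the standard Simon--Wolff / Germinet--Klein argument, see~\cite{CH,S,KL}. The main obstacle I anticipate is not conceptual but a bookkeeping one: selecting the MSA exponents so that the non-optimal volume factor $L^8$ in Theorem~\ref{thm:weg} is absorbed by the probabilistic gain of the induction, and checking gauge covariance at each scale so that independence of the fluxes (rather than of the potentials) is what drives the conditional probability estimates.
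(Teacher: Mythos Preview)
Your proposal is correct and follows essentially the same approach as the paper: the paper itself does not give a detailed proof of Theorem~\ref{thm:loc}, merely noting that the Wegner estimate (Theorem~\ref{thm:weg}) together with the Lifshitz tail~\eqref{eq:lifshitz} yields localization by the standard multiscale argument, with references to \cite{CH,S,KL}, after observing that the assumptions force $E_0(b)<E_{\rm crit}$ so that the Wegner bound is available near the band edge. Your sketch is in fact more detailed than the paper's own treatment, spelling out the initial-scale estimate, the gauge-invariance issue, and the absorption of the $L^8$ volume factor, all of which are implicit in the paper's appeal to standard arguments.
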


\section{Proof of the Wegner estimate: Theorem \ref{thm:weg}}

Let  $\alpha_f$,  denote a vector potential 
 of a magnetic field of flux 1 through the square $f$.
I.e., $\rd\alpha_f = \delta_f$ where $\delta_f$ is the function on $\FF$ which is one 
at $f$ and zero otherwise.
Later in Section~\ref{sec:prop}
we will choose specific  gauges, see \eqref{eq:gauge} below.  
In the current section we only need that the absolute value of $\alpha_f$ is bounded by one.

Let $\Lambda = \Lambda_L$ for some $L \in \N$.
Define  $\FF_\Lambda :=  \{ f \in \FF : f \subset \Lambda \}$ and 
$\AR_\Lambda := \{ a \in \AR : a \in \Lambda\times \Lambda \}$. 
Henceforth we restrict the functions $\alpha_f$ to 
$\AR_\Lambda$.
We define  the vector potential
$$
A_\omega =  \sum_{f \in \FF_\Lambda} \omega_f \alpha_f ,
$$
then $\rd A_\omega = B_\omega$. 
 Moreover
we will drop the subscript $\omega$ in the notation and set  $H = H_L(A)$
in this section.

We will use perturbation theory, we abbreviate $Y_f= \frac{\pt}{\pt \om_f}$,
and for any state $\psi$ we introduce the expectation of the
$Y_f$-derivative of the energy in state $\psi$:
$$
 \langle Y_f H \rangle_\psi:=
 - \sum_{\substack{(x,y)\in \Lambda^2 \\  |x-y|=1}}
 \overline{\psi(x)}  i  \alpha_f^{}(x,y) e^{i A^{}(x,y)}  \psi(y).
$$
Let $\lambda$ be a non-degenerate eigenvalue of $H$ with a normalized eigenfunction $\psi$. 
In that case the eigenvalue $\lambda$ is a function of 
the random variables $\{ \omega_f \}$. For each $f \in \FF$, we have 
by the Hellmann-Feynman theorem and a straight forward calculation  that
\begin{align} \label{ylambda1}
Y_f \lambda= \frac{\partial \lambda }{\partial {\omega_f}}  = \langle Y_f H \rangle_\psi.
\end{align}

To give a physical meaning to the right hand side
 we introduce the current of a wavefunction in the Hilbert space. 
 For  $\varphi \in \ell^2(\Lambda)$, we 
define the  current $J_\varphi$ as the  function on $\AR_\Lambda$ 
given by  
$$
J_\varphi(a) :=  - 2  {\rm Re} \overline{\varphi(a_i)} i e^{ i A(a)} \varphi(a_t) .
$$

The current does not depend on the chosen gauge and  observe that 
\be \label{eq:symofcurr}
J_\varphi(\overline{a}) = - J_\varphi(a) .
\ee
If $\varphi$ is
 an eigenvector then it is a straightforward calculation to show that 
the ``divergence'' of $J_\varphi$ vanishes, namely
for all $x \in \Lambda$
\be \label{divzero} 
 \sum_{\substack{e :  \ |e|=1 \\ x+e \in \Lambda}} J_\varphi(x,x+e) = 0 .
\ee

We can write  the right hand side of  \eqref{ylambda1}  as 
\be
     \langle Y_f H \rangle_\psi =   \frac{1}{2}  \sum_{a \in \AR_\Lambda} \alpha_f(a) J_\psi(a) .
\label{ylambda}
\ee
Notice, since we sum over directed edges we have a factor $\frac{1}{2}$. 
We also remark that the left hand side is gauge invariant,
while the right hand side seems to depend on the gauge $\alpha_f$.
The gauge independence of the right hand side follows from the
fact that the divergence of $J_\psi$ is zero \eqref{divzero}.

\bigskip

In Section~\ref{sec:pr} we will prove the following key technical estimate: 
\begin{proposition}\label{prop:lower}  \label{prop51}
Let $b\in (0,\pi/2)$ and $E^* < E_{\rm crit} $. 
Then there exists a constant $C=C(b,E^*)$ such that 
\be
   \sum_{f \in \FF_\Lambda}      \langle Y_f H \rangle_\psi^2 \ge C^{-1}L^{-4}
\label{squarelow}
\ee
for any normalized eigenunction $\psi$ with eigenvalue  in $(-\infty , E^* ]$
and for any collection of fluxes $\om_f\in \T_b$.
\end{proposition}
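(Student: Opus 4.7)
The plan is to choose a convenient gauge for each $\alpha_f$, use a discrete stream-function identity to reduce the statement to a coercivity bound on $\|J_\psi\|_{\ell^2}$, and then prove that bound by a variational comparison with the Harper Hamiltonian.

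\textbf{Gauge and stream function.} For each plaquet $f=f_{(x_1,x_2)}\in\FF_\Lambda$ I would take $\alpha_f$ to be the \emph{vertical string gauge}: $\alpha_f((x_1,y),(x_1+1,y))=1$ for $x_2<y\le L$, extended antisymmetrically and zero on all other edges. Then $|\alpha_f|\le 1$, $\rd\alpha_f=\delta_f$, and by $J_\psi(\bar a)=-J_\psi(a)$ formula~\eqref{ylambda} collapses to
\[
\langle Y_f H\rangle_\psi \;=\; \Psi_f \;:=\; \sum_{y=x_2+1}^L J_\psi\bigl((x_1,y),(x_1+1,y)\bigr).
\]
Divergence-freeness of $J_\psi$ (equation~\eqref{divzero}) combined with the confinement of $\psi$ to $\Lambda$ implies that the total horizontal current through any vertical cut vanishes, which shows that $\Psi$, extended by $0$ to plaquets outside $\Lambda$, is a \emph{discrete stream function}: $J_\psi(a)=\Psi_{f_+(a)}-\Psi_{f_-(a)}$ for every $a\in\AR_\Lambda$, where $f_\pm(a)$ denote the two plaquets bordering $a$. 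Since each plaquet of $\FF_\Lambda$ has four sides, the elementary inequality $(u-v)^2\le 2u^2+2v^2$ yields
\[
\|J_\psi\|_{\ell^2(\AR_\Lambda)}^2 \;\le\; 16\sum_{f\in\FF_\Lambda}\Psi_f^2 \;=\; 16\sum_{f\in\FF_\Lambda}\langle Y_f H\rangle_\psi^2,
\]
so the proposition is reduced to proving the coercivity bound $\|J_\psi\|_{\ell^2}^2\gtrsim L^{-4}$, uniformly in $\psi$ and in $(\om_f)_{f\in\FF_\Lambda}\in\T_b^{\FF_\Lambda}$.

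\textbf{Lower bound via Harper.} The threshold $E_{\rm crit}=4-\sqrt 8$ is precisely the bottom of the spectrum of the Harper operator $K_\pi$ (flux $\pi$ per plaquet), whose ground states can be chosen real in a suitable gauge and therefore carry zero current. To exploit this, I would pick any vector potential $A_\pi$ with $\rd A_\pi=\pi$ at every plaquet, set $\Delta:=A_\pi-A$ (so $\rd\Delta(f)=\pi-\om_f\in\T_b$), and expand per undirected edge
\[
(\psi,(K_\pi-H)\psi) \;=\; \sum_{a}\bigl[\,2|\psi(a_i)||\psi(a_t)|\cos\theta(a)(1-\cos\Delta(a)) + J_\psi(a)\sin\Delta(a)\,\bigr],
\]
where $\theta(a):=A(a)+\arg\psi(a_t)-\arg\psi(a_i)$. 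Since $K_\pi\ge E_{\rm crit}$ and $(\psi,H\psi)\le E^*$, the left-hand side is at least $E_{\rm crit}-E^*>0$, while Cauchy--Schwarz controls the sine term by $\|J_\psi\|_{\ell^2}\cdot\|\sin\Delta\|_{\ell^2}\lesssim L\|J_\psi\|_{\ell^2}$. If the cosine term can be bounded by $(E_{\rm crit}-E^*)/2$, one concludes $\|J_\psi\|_{\ell^2}\gtrsim L^{-1}$ and hence $\|J_\psi\|_{\ell^2}^2\gtrsim L^{-2}$, comfortably stronger than the needed $L^{-4}$.

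\textbf{Main obstacle.} The principal difficulty is controlling the sign-indefinite cosine term $\sum_{a}2|\psi(a_i)||\psi(a_t)|\cos\theta(a)(1-\cos\Delta(a))$: a naive absolute-value estimate is of order $\|\psi\|^2=1$, which exceeds $E_{\rm crit}-E^*$ and is therefore useless. The plan here is to choose the reference gauge $A_\pi$ cleverly --- for instance via a discrete Coulomb gauge, or one adapted to the phases of $\psi$ --- so that the distribution of $\Delta(a)$ across edges, combined with the flux separation $\om_f\in\T_b$, forces a quantitative cancellation in the cosine sum. This is the step that consumes the two essential hypotheses $E^*<E_{\rm crit}$ and $\om_f\in\T_b$, and it constitutes the main technical hurdle of the proposition.
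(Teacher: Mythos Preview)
Your reduction of the proposition to the coercivity bound $\|J_\psi\|_{\ell^2(\AR_\Lambda)}^2\gtrsim L^{-4}$ is correct, and it is essentially the same as the paper's Lemma~\ref{currentasderiv}: your stream-function identity $J_\psi(a)=\Psi_{f_+(a)}-\Psi_{f_-(a)}$ is exactly the statement that $J_\psi(a)=c_a\langle Y_{f_a}H\rangle_\psi-c_{\bar a}\langle Y_{f_{\bar a}}H\rangle_\psi$, and both arguments yield $\sum_a|J_\psi(a)|^2\le 32\sum_f\langle Y_fH\rangle_\psi^2$ (your constant $16$ counts undirected edges, the paper's $32$ directed ones). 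So far so good.

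The second half of your proposal, however, has a genuine gap that you yourself flag but do not close. The cosine term
\[
\sum_a 2|\psi(a_i)||\psi(a_t)|\cos\theta(a)\,\bigl(1-\cos\Delta(a)\bigr)
\]
cannot be made smaller than $(E_{\rm crit}-E^*)/2$ by a clever choice of the reference gauge $A_\pi$. The only freedom is $\Delta\mapsto\Delta+\rd\mu$; the weights $2|\psi(a_i)||\psi(a_t)|\cos\theta(a)$ are fixed by $\psi$ and $A$. Since $\rd\Delta(f)=\pi-\om_f\in\T_b$ on every plaquet, the factors $1-\cos\Delta(a)$ cannot all be made small, and there is no cancellation mechanism: for a low-energy eigenfunction one expects $\cos\theta(a)$ to be predominantly positive (indeed $\sum_a 2|\psi(a_i)||\psi(a_t)|\cos\theta(a)=4-E$), so the cosine term is generically of order one, independent of how small $E_{\rm crit}-E^*$ is. In short, the variational comparison with $K_\pi$ produces a large quantity split into cosine and sine parts, but gives no reason for the cosine part to be subdominant; when the current is small the cosine term simply absorbs the whole gap, and no contradiction results.

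The paper proves the current lower bound (its Lemma~\ref{lemma:lowerreg}) by a completely different, \emph{local} argument. Using the eigenvalue equation at the point $x_0$ where $|\psi|$ is maximal and at one of its neighbours, together with $E\le E^*<4-\sqrt8$, one locates a single plaquet $Q\in\FF_\Lambda$ on whose four corners $|\psi|\ge cM\ge cL^{-1}$ (Proposition~\ref{propsquare}). The hypothesis $\om_Q\in\T_b$ then forces at least one of the four phase increments $\varphi_a=A(a)+\lambda(a_t)-\lambda(a_i)$ along $\partial Q$ to lie in $\T_{b/8}$, so that $|\sin\varphi_a|\ge c(b)>0$ and hence $|J_\psi(a)|=2|\psi(a_i)||\psi(a_t)||\sin\varphi_a|\gtrsim L^{-2}$ on that edge. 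Squaring gives $\|J_\psi\|_{\ell^2}^2\gtrsim L^{-4}$. This is where both hypotheses $E^*<E_{\rm crit}$ and $\om_f\in\T_b$ are actually used, and the argument is elementary --- no global comparison operator is needed.
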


\bigskip

\noindent
{\bf Proof of Theorem \ref{thm:weg}}.
Let $\lambda_1,\lambda_2,...$ denote  the eigenvalues of $H$. First we assume that they are simple.
We fix $E^* < E_{\rm crit}$.
In the sequel we set $\chi = \chi_{E,\eta}$, with $E + \eta/2 \leq E^*$.
{F}rom Proposition \ref{prop:lower} and \eqref{ylambda1}  it easily follows that
\be
   \Tr \chi(H) = \sum_\ell \chi(\lambda_\ell)
\leq CL^4
  \sum_\ell \sum_{f \in \FF_\Lambda} (Y_f\lambda_\ell)^2 \chi(\lambda_\ell) ,
\label{main1}
\ee
To estimate the right hand side we 
 introduce the following functions,
$$
F(x) := \int_{-\infty}^x \chi(t) \rd t , \quad G(y) := \int_{-\infty}^y F(x) \rd x .
$$
For any $Y=Y_f$ and any $\ell$ we have by the chain rule and  Leibniz
\be\label{main2}
(Y\lambda_\ell)^2
\chi(\lambda_\ell) = Y^2 G(\lambda_\ell) -   (Y^2 \lambda_\ell)F(\lambda_\ell) .
\ee
To estimate the sum of the second term over all eigenvalues $\lambda_\ell$, we will use 
 the following Lemma, for a proof  see \cite{EH2}.
\begin{lemma}\label{lm:trick}
We have for any $Y=Y_f$
\be
    \Tr (Y^2 H) F(H)\le  \sum_\ell(Y^2\lambda_\ell)F(\lambda_\ell) .
\label{trick}
\ee
\end{lemma}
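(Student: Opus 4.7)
The plan is to use second-order perturbation theory to expand $Y^2\lambda_\ell$ and then exploit the single structural property of $F$ that we actually have at our disposal, namely that $F$ is nondecreasing (because $F' = \chi_{E,\eta} \ge 0$). That monotonicity is the only feature of $F$ we will need beyond its differentiability; the proof will in fact work for any nondecreasing $F$.

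First I would treat the (generic) case where all eigenvalues of $H = H_L(A_\omega)$ are simple. Standard Rayleigh--Schr\"odinger perturbation theory in the parameter $\omega_f$ gives the identity
\begin{equation}
Y^2 \lambda_\ell \;=\; \langle \psi_\ell, (Y^2 H)\psi_\ell\rangle \;+\; 2\sum_{k:\,k\ne \ell} \frac{|\langle\psi_\ell, (YH)\psi_k\rangle|^2}{\lambda_\ell-\lambda_k},
\end{equation}
where $\{\psi_\ell\}$ is the orthonormal eigenbasis of $H$. Weighting this identity with $F(\lambda_\ell)$ and summing over $\ell$, the diagonal contribution spectrally reassembles to
\begin{equation}
\sum_\ell F(\lambda_\ell)\,\langle\psi_\ell,(Y^2H)\psi_\ell\rangle \;=\; \Tr F(H)(Y^2 H),
\end{equation}
which is exactly the left-hand side of \eqref{trick}. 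Proving the lemma thus reduces to showing that the off-diagonal piece is nonnegative.

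The key step is a symmetrization in the summation indices. Since $D_{\ell k} := |\langle\psi_\ell,(YH)\psi_k\rangle|^2$ is symmetric in $\ell,k$ while the denominator is antisymmetric, swapping $\ell \leftrightarrow k$ in half of the sum gives
\begin{equation}
2\sum_{\ell\ne k}\frac{F(\lambda_\ell)\, D_{\ell k}}{\lambda_\ell-\lambda_k} \;=\; \sum_{\ell\ne k} \frac{F(\lambda_\ell)-F(\lambda_k)}{\lambda_\ell-\lambda_k}\,D_{\ell k},
\end{equation}
and each difference quotient on the right is nonnegative because $F$ is nondecreasing. This completes the argument when $H$ has simple spectrum.

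The only real obstacle is the possibility of degenerate eigenvalues, for which the na\"ive expansion has $1/0$ terms. I would remove it by invoking Kato--Rellich analytic perturbation theory: along the line $t\mapsto \omega+t\,{\bf e}_f$ in parameter space the operator $H$ is entire analytic in $t$ via the phases $e^{iA(x,y)}$, so one obtains real-analytic branches of eigenvalues and eigenvectors and the second-order expansion above is justified at every point. Equivalently, one can add a small generic perturbation $\epsilon V$ to split all degeneracies, apply the already-established inequality in the non-degenerate regime, and pass to $\epsilon\downarrow 0$ using continuity of both sides of \eqref{trick} in $\epsilon$. Either way, the whole argument rests on nothing more than the $\ell\leftrightarrow k$ symmetrization together with the monotonicity of $F$.
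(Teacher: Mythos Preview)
The paper does not prove this lemma in-text; it simply refers to the authors' earlier work \cite{EH2}. Your argument is correct and is precisely the standard proof: the second-order Rayleigh--Schr\"odinger formula, weighting by $F(\lambda_\ell)$, and the $\ell\leftrightarrow k$ symmetrization that converts the off-diagonal contribution into a nonnegative sum of difference quotients $\frac{F(\lambda_\ell)-F(\lambda_k)}{\lambda_\ell-\lambda_k}\,|\langle\psi_\ell,(YH)\psi_k\rangle|^2$ of the monotone function $F$. This is exactly the mechanism of the proof in \cite{EH2}, so your approach coincides with the intended one.

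One small remark on the degenerate case: your option (a) as stated is slightly loose, since even with Kato--Rellich analytic branches the literal second-order formula has $0/0$ terms at a crossing. The clean way to phrase it is that the set of $t$ along the line $t\mapsto\omega+t\,{\bf e}_f$ where two analytic branches cross is discrete (unless two branches coincide identically, in which case the corresponding off-diagonal matrix element $\langle\psi_k,(YH)\psi_\ell\rangle$ vanishes for all $t$), so the inequality holds for a dense set of $t$ and then at every $t$ by continuity of both sides. Your option (b) is also fine and is in the same spirit. The paper itself, in the paragraph following the lemma, handles degeneracy by exactly this analytic-branch device.
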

Thus combining \eqref{main1}, \eqref{main2} and \eqref{trick}, we have
\be
   \Tr \chi(H)\leq CL^4
\sum_{f \in \FF_\Lambda}\Big( \Tr Y_f^2 G(H) - \Tr (Y_f^2 H)F(H)\Big).
\label{com}
\ee
To estimate the second term we use that $\| F \|_\infty \leq C \eta$. Moreover, we use 
that  $\| Y_f^2 H \| \leq 4$, which   can be seen using that for any  $\varphi \in \ell^2(\Lambda)$
we have 
$$
(Y_f^2 H \varphi)(x) =  - \sum_{|x-y|=1}  \alpha_f^2(x,y) e^{i A(x,y)} \varphi(y) .
$$
To estimate the the first term in \eqref{com} we integrate by parts after taking expectation,
$$
    \bE \Tr Y_f^2 G(H) = \int \prod_{\zeta\in \FF_\Lambda}
  v_\zeta(\om_\zeta)\rd \om_\zeta
\frac{\partial^2}{\partial \om_f^2}  \Tr G(H)
  = \int  \prod_{\zeta\neq f} v_\zeta(\om_\zeta)\rd \om_\zeta
  \int v_f''(\om_f) \rd \om_f \Tr G(H).
$$
Using that $G(y)\leq C\eta y$ we can estimate  $| \Tr G(H)| \leq C L^2 \eta $.
Thus we obtain 
\be \label{eqofwegner}
 {\bf E} \Tr \chi(H)\leq  C  \eta L^{8} ,
\ee
and the  Theorem  follows in the non-degenerate case. 

\bigskip

In the argument above we used that $\lambda=\lambda_\ell$ is
a simple eigenvalue since we implicitly  assumed that $\lambda_\ell$
is differentiable w.r.t. $\om_f$. Differentiability of the eigenvalues
may be ensured even in case of degeneracy  by choosing 
an appropriate branch, but this choice  may depend on $f$ hence
a more careful treatment is needed which we explain now.

Suppose that $\lambda$ is  a degenerate eigenvalue 
 with degeneracy $m_\lambda$ for some fixed $\om$. Let 
 $P_\lambda$ denote the eigenprojection and we fix 
$\{ \varphi_{\lambda, h} \}_{h=1,...,m_\lambda}$ an orthonormal basis of 
the corresponding eigenspace. Moreover, 
for each fixed $f \in \FF_\Lambda$  
we can also choose orthonormal eigenfunctions 
$\psi_{f,\lambda,h}$, with  $h=1,...,m_\lambda$, which 
are real analytic functions of $\omega_f$ (if the other variables of $\omega$ are kept fixed)
and their eigenvalues 
$E_{f,\lambda,h}$ 
are real analytic in $\omega_f$ as well, see \cite{K} (Theorem 1.10. Section II).
By analytic perturbation theory, the derivatives 
$Y_f E_{f,\lambda, h}$ ($h=1,2, \ldots, m_\lambda$)
are the eigenvalues of the matrix
$$
   T := P_E (Y_f H) P_E  \upharpoonright {\rm ran} P_E.
$$
 By   Jensen's inequality,   we  have 
\be \label{basicdegen}
\sum_{h=1}^{m_\lambda} (Y_f E_{f,\lambda, h})^2  = {\rm Tr} T^2 =  
\sum_{h=1}^{m_\lambda} ( \varphi_{\lambda, h} , T^2 \varphi_{\lambda,h} ) 
\geq \sum_{h=1}^{m_\lambda} ( \varphi_{\lambda,h} , T  \varphi_{\lambda,h})^2 
 = \sum_{h=1}^{m_\lambda} \langle Y_f H \rangle_{\varphi_{\lambda,h}}^2 .
\ee
Summing \eqref{basicdegen} over $f$, we find, using  Proposition \ref{prop:lower}, that for $\lambda \leq E^*$,
\be \label{eq:prop31plus}
\sum_{f \in \FF_\Lambda} \sum_{h=1}^{m_\lambda} (Y_f  E_{f,\lambda,h })^2 \geq m_\lambda C^{-1} L^{-4} .
\ee
We can now proceed as before. Using \eqref{eq:prop31plus} and  \eqref{main2}, we find 
\begin{align*}
{\rm Tr} \chi(H) &= \sum_{\lambda \in \sigma(H)} m_\lambda \chi(\lambda) \\
&\leq \sum_{\lambda \in \sigma(H)} C L^4 \sum_{f \in \FF_\Lambda} \sum_{h=1}^{m_\lambda} (Y_f E_{f,\lambda, h } )^2 
\chi( E_{f,\lambda,h}   )    \\
&=  C L^4 \sum_{f \in \FF_\Lambda} 
 \sum_{\lambda \in \sigma(H)}  \sum_{h=1}^{m_\lambda} \left(  Y_f^2 G(E_{f,\lambda,h}) - 
( Y^2_f E_{f,\lambda,h} ) F(E_{f,\lambda, h}) \right)\\
&\leq  CL^4
\sum_{f \in \FF_\Lambda}\Big( \Tr Y_f^2 G(H) - \Tr (Y_f^2 H)F(H)\Big) ,
\end{align*}
where we used
  Lemma \ref{lm:trick} in the last step. 
 The rest of the proof is the same as in the case of non-degenerate eigenvalues
and we thus proved  Theorem \ref{thm:weg}. \qed

\bigskip

We remark that the possible degeneracy of the eigenvalues  may 
also be treated by standard perturbation theory for almost all $\omega$, using the 
fact that the interior of the set of $\omega$'s where there is no eigenvalue 
crossing has full measure.

\section{Proof of Proposition \ref{prop51}}\label{sec:prop}

\label{sec:pr}

We recall that by  \eqref{ylambda} we have 
$$
 \langle Y_f H \rangle_\psi =   \frac{1}{2} \sum_{a \in \AR_\Lambda} \alpha_f(a) J_\psi(a) .
$$
 The following lemma inverts this linear relation and
expresses the current in terms of $ \langle Y_f H \rangle_\psi  $.

\begin{lemma} \label{currentasderiv}
For  $a \in \AR$, let $f_a$ be the unique square in $\FF$ such that $a \in \partial f$.
Then for $a \in \AR_\Lambda$, we have 
\be \label{currentasderivative}
J_\psi(a) =  c_a     \langle Y_{f_a} H \rangle_\psi      -   
c_{\overline{a}}   \langle Y_{f_{\overline{a}}}  H \rangle_\psi    ,
\ee 
where $c_a=1$ if $f_a \in \FF_\Lambda$ and $c_a = 0$ otherwise. 
\end{lemma}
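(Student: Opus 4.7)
The plan is to show that the function $P: \FF \to \R$ defined by $P(f) := \langle Y_f H\rangle_\psi$ on $\FF_\Lambda$ and $P(f) := 0$ elsewhere satisfies $J_\psi(a) = P(f_a) - P(f_{\overline a})$ on every $a \in \AR_\Lambda$; this is exactly the asserted identity, since the indicators $c_a, c_{\overline a}$ merely record whether the adjacent plaquet lies in $\FF_\Lambda$ (so that the extension of $P$ by zero outside is automatically picked up).

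My first step would be to construct a candidate ``stream function'' for $J_\psi$ by hand. Set
\begin{equation*}
Q(f_y) := \sum_{k=y_2+1}^{L} J_\psi\bigl((y_1+1,k),(y_1,k)\bigr) \qquad \text{for } f_y \in \FF_\Lambda,
\end{equation*}
the flux of $J_\psi$ through a vertical half-line escaping $\Lambda$ out the top from $f_y$, and $Q \equiv 0$ outside $\FF_\Lambda$. I would then verify $J_\psi(a) = c_a Q(f_a) - c_{\overline a} Q(f_{\overline a})$ edge by edge. For a horizontal edge the two sums defining $Q(f_a)$ and $Q(f_{\overline a})$ differ by exactly one term, which equals $J_\psi(a)$; the top-row case is handled by the same cancellation since the string truncates there. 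For a vertical edge the two sums have the same range but shifted horizontal arguments, so isolating $J_\psi(a)$ requires \eqref{divzero} at each vertex of the relevant column: the divergence-free identity lets one rewrite the difference of horizontal currents there as a difference of consecutive vertical currents, and summing over the column telescopes down to $J_\psi(a)$ (the terminal contribution at row $L$ dropping out because \eqref{divzero} at that boundary vertex has no upward neighbour). Vertical edges meeting $\partial\Lambda$ reduce to the same telescoping over a half-plane, valid because the total outflow of $J_\psi$ through $\partial\Lambda$ vanishes, itself a direct consequence of \eqref{divzero}.

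Once the representation $J_\psi(a) = c_a Q(f_a) - c_{\overline a} Q(f_{\overline a})$ is in hand, identifying $Q = P$ is a short algebraic step: substitute the representation into \eqref{ylambda} for any admissible gauge $\alpha_f$, use $\alpha_f(\overline a) = -\alpha_f(a)$ and the change of variables $a \mapsto \overline a$ to fold the two terms into $\sum_{a \in \AR_\Lambda} \alpha_f(a) Q(f_a)$, and group by the plaquet $f'$ with $f_a = f'$; the constraint $\sum_{a \in \partial f'} \alpha_f(a) = \rd\alpha_f(f') = \delta_{f,f'}$ together with $Q \equiv 0$ outside $\FF_\Lambda$ leaves only the $f' = f$ contribution, giving $\langle Y_f H\rangle_\psi = Q(f)$. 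I expect the telescoping verification for vertical and boundary edges to be the main technical obstacle: this is the only point at which the eigenstate property of $\psi$ is genuinely used, via the divergence-free identity \eqref{divzero}, whereas the algebraic identification $Q = P$ is purely formal and works for any antisymmetric divergence-free $1$-form on $\AR_\Lambda$.
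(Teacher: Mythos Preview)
Your argument is correct, but it takes a different and somewhat longer path than the paper's. The paper never introduces a global stream function $Q$. Instead it exploits the gauge invariance of $\langle Y_f H\rangle_\psi$ (already noted after \eqref{ylambda}, as a consequence of \eqref{divzero}) and, for each type of edge separately, picks one of four explicit ``string'' gauges $\alpha_f^{(\tau)}$ escaping to a different side of $\Lambda$. With the appropriate choice the sum in \eqref{ylambda} for $\langle Y_{f_a} H\rangle_\psi$ and for $\langle Y_{f_{\overline a}} H\rangle_\psi$ either reduces to a single term or to two half-line sums differing by exactly one term, so the identity \eqref{currentasderivative} drops out in each case without any telescoping. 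In your language, the paper's gauge $\alpha_f^{(1)}$ is precisely the one whose insertion into \eqref{ylambda} yields your $Q(f)$, so your step 2 is essentially the gauge-invariance remark done in reverse; but rather than stick with that one gauge and telescope through the column (your vertical-edge computation), the paper switches to the horizontal string gauge $\alpha_f^{(2)}$ for vertical edges, making that case just as immediate as the horizontal one.

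What each approach buys: the paper's case-by-case gauge choice is shorter and avoids all explicit telescoping, at the cost of introducing four auxiliary gauges. Your route is more systematic---one global object, one algebraic identification---and makes the role of \eqref{divzero} fully explicit rather than hiding it inside the gauge-invariance remark. One small omission: your horizontal-edge discussion covers interior edges and the top row but not the bottom row $y_2=-L$, where $c_{\overline a}=0$ and one must show $Q(f_y)=J_\psi(a)$ directly; this needs the same ``total outflow zero'' fact you invoke for the vertical boundary case (sum \eqref{divzero} over the half-box to the right of the column through $y$), so it is easily filled in. The paper sidesteps this case by switching to the downward string gauge $\alpha_f^{(3)}$, for which the sum collapses to the single edge $a$.
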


\bigskip
\noindent
{\bf Proof of Lemma \ref{currentasderiv}}. We will introduce four different gauges, $\al^{(\tau)}$,
$\tau=1,2,3,4$, for the magnetic field $\delta_f$ whose flux is 1 through the square $f$
and zero elsewhere.
 For  $f=f_x \in \FF$ with $x=(x_1,x_2)$ we set 
\begin{align}  \label{eq:gauge}
\alpha_f^{(1)}(y,y + {\bf e}_2 )  &= 0 \; , \qquad   \alpha_f^{(1)}(y,y + {\bf e}_1 ) = \left\{ \begin{array}{ll} -1  \ , & {\rm if} \ 
 y_2 > x_2, \, y_1 = x_1 \\ 0 \ , & {\rm otherwise } \end{array} \right. , \\
\alpha_f^{(2)}(y,y + {\bf e}_1 )  &= 0 \; , \qquad 
\alpha_f^{(2)}(y,y + {\bf e}_2 ) = \left\{ \begin{array}{ll} 1 \ , & {\rm if} \ y_1 > x_1, \, 
y_2 = x_2 \\ 0 \ , & {\rm otherwise } \end{array} \right.   , \nonumber \\
\alpha_f^{(3)}(y,y + {\bf e}_2 )  &= 0   \; ,
 \qquad   \alpha_f^{(3)}(y,y + {\bf e}_1 ) = \left\{ \begin{array}{ll} 1  \ , & {\rm if} \ 
 y_2 \leq  x_2, \, y_1 = x_1 \\ 0 \ , & {\rm otherwise } \end{array} \right. ,  \nonumber \\
\alpha_f^{(4)}(y,y + {\bf e}_1 )  &= 0 \; , \qquad 
\alpha_f^{(4)}(y,y + {\bf e}_2 ) = \left\{ \begin{array}{ll} - 1 \ , & {\rm if} \ y_1 \leq  x_1, \, y_2 = x_2 \\ 0 \ , & {\rm otherwise } \end{array} \right.   , \nonumber 
\end{align}
and extend the definition  to $\AR$ by \eqref{defofG}.

Now we can prove Lemma \ref{currentasderiv}.
Let  $a = (y,y+{\bf e}_1) \in \AR_\Lambda$ be a horizontal edge.  
For  $-L < y_2$, the
 gauge $\tau=1$ gives \eqref{currentasderivative}. For the boundary case $y_2=-L$
we can use the gauge $\tau=3$. The identity for the edge  $\overline{a}$ follows using  \eqref{eq:symofcurr}.
The vertical edges follow similarly using the gauge $\tau=2,4$.
\qed

\bigskip

The next lemma  gives a lower bound on the current of an eigenfunction.
The proof  will be given in Section \ref{sec:proofofreg}.

\begin{lemma} \label{lemma:lowerreg} Let $b\in(0,\pi/2)$ and $E^* < E_{\rm crit}$. 
Then there 
exists a constant $C = C(b,E^*)$ such that 
$$
\sum_{{ a \in \AR_\Lambda }} |J_\psi(a)|^2 \geq C L^{-4} 
$$
for all normalized eigenfunctions $\psi$ of $H_\Lambda(A)$, with energy $E \leq E^*$,
and for any  vector potential $A$ with $\rd A \in \T_{b}$ (see \eqref{def:T}).
\end{lemma}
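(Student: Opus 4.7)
The plan is to exhibit a plaquette $f^*\subset\Lambda$ on whose four corners $|\psi|$ is uniformly bounded below at order $L^{-1}$, and then to exploit the flux constraint on that plaquette. Wherever $\phi := |\psi|$ is positive, write $\psi = \phi\, e^{i\theta}$, and for a directed edge $a=(x,y)$ on which both endpoint amplitudes are positive set $\alpha_a := \theta_y - \theta_x + A(a)$. Then
$$J_\psi(a) \;=\; 2\,\phi(x)\phi(y)\sin\alpha_a,$$
and summing around any plaquette $f$ whose four corners all carry nonzero amplitude yields the closed-loop identity $\sum_{a\in\partial f}\alpha_a \equiv \rd A(f) \pmod{2\pi}$.

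\smallskip

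Given such a plaquette with $\rho := \min_{x\in f}\phi(x)>0$, set $\varepsilon_a := |\sin\alpha_a|$. Each $\alpha_a$ lies within $\arcsin\varepsilon_a \leq 2\varepsilon_a$ of $\pi\Z$, so $\sum_{a\in\partial f}\alpha_a$ lies within $2\sum_a\varepsilon_a$ of $\pi\Z$. Since the flux $\rd A(f)\in\T_b$ is at distance $\geq b$ from $\pi\Z$ modulo $2\pi$, the triangle inequality yields $\sum_a\varepsilon_a\geq b/2$, so by Cauchy--Schwarz $\sum_a\varepsilon_a^2 \geq b^2/16$. Combined with $J_\psi(a)^2 = 4\phi(x)^2\phi(y)^2\varepsilon_a^2 \geq 4\rho^4\varepsilon_a^2$, this gives the single-plaquette bound
$$\sum_{a\in\partial f} J_\psi(a)^2 \;\geq\; \tfrac{1}{4}\, b^2\, \rho^4,$$
so the desired conclusion $\sum_{a\in\AR_\Lambda}J_\psi(a)^2\geq c L^{-4}$ will follow at once from the existence of a plaquette $f^*$ on which $\rho_{f^*}\geq c L^{-1}$.

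\smallskip

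To produce such a plaquette, the core input is the propagation estimate obtained by applying Cauchy--Schwarz to the eigenvalue equation $(4-E)\psi(x) = \sum_{y\sim x}e^{iA(x,y)}\psi(y)$:
$$\sum_{y\sim x}\phi(y)^2 \;\geq\; \frac{(4-E)^2}{4}\,\phi(x)^2 \;\geq\; 2\phi(x)^2,$$
where the strict inequality $(4-E)^2>8$ holds precisely because $E\leq E^* < E_{\rm crit} = 4-\sqrt 8$. Since $\|\phi\|_2 = 1$ and $|\Lambda|\leq (2L+1)^2$, a point of maximum $x_0$ satisfies $\phi(x_0)^2 \geq cL^{-2}$; at such a maximum, each neighbor satisfies $\phi(y)\leq \phi(x_0)$. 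Taking absolute values directly in the eigenvalue equation at $x_0$ gives $\sum_{y\sim x_0}\phi(y)\geq (4-E^*)\phi(x_0) > 2\sqrt 2\,\phi(x_0)$, which forces at least three of the four neighbors to carry mass $\phi\gtrsim \phi(x_0)$ (otherwise the remaining ones would exceed the max). Among any three neighbors of a lattice point, at least two are perpendicular; together with $x_0$ they occupy three corners of a common plaquette, and a second application of the same absolute-value argument at these two corners is designed to produce the missing lower bound at the fourth corner, yielding $f^*$.

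\smallskip

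The principal obstacle is the last step: while the $\ell^2$-pigeonhole immediately produces a point where $\phi$ is at the level $L^{-1}$, extending this lower bound to all four corners of a single $2\times 2$ plaquette is delicate. The propagation estimate becomes essentially vacuous once iterated at sites whose mass is dominated by that of the already-identified maximum $x_0$, so the strict inequality $(4-E^*)^2 > 8$ has to be used carefully at each step to ensure genuine new mass propagates outward. In addition, some combinatorial case analysis is needed to handle the possibility that the dominant neighbors of $x_0$ all lie on a single axis, or that the ``fourth corner'' of the candidate plaquette is small and requires the analysis to be redirected to a different plaquette around $x_0$.
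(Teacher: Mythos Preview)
Your overall architecture matches the paper exactly: locate a plaquette $f^*$ on which $|\psi|\gtrsim L^{-1}$ at all four corners, then use the flux constraint $\rd A(f^*)\in\T_b$ to force a nonvanishing $\sum_{a\in\partial f^*}\sin^2\alpha_a$. Your flux argument is correct and equivalent to the paper's (the paper phrases it as: some $\alpha_a$ must lie in $\T_{b/8}$, else the sum would land within $b/2$ of $\pi\Z$).

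The gap you flag in your last paragraph is real, and more specific than you indicate. Your route to the plaquette is: from $\sum_{y\sim x_0}\phi(y)\ge(4-E^*)M>2\sqrt2\,M$ deduce that at least three neighbors have $\phi\ge cM$, pick two perpendicular ones $y_1,y_2$, and then apply the eigenvalue equation at $y_1$ (or $y_2$) to control the fourth corner. The quantitative problem is that ``three large neighbors'' only gives $c$ up to about $\sqrt2-1\approx0.41$ (if two neighbors were $\le cM$ then the sum would be $\le(2+2c)M$, forcing $c>\sqrt2-1$). But the eigenvalue equation at a point $y$ with $\phi(y)=\kappa M$ yields, after bounding the two axial neighbors by $M$ each,
\[
\phi(y+\mathbf t)+\phi(y-\mathbf t)\;\ge\;\bigl((4-E)\kappa-2\bigr)M,
\]
which is only useful when $\kappa>2/(4-E)>1/\sqrt2\approx0.71$. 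With $\kappa\approx0.41$ the right side is negative and the step fails.

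The paper's remedy is not to seek three large neighbors but to identify \emph{one} neighbor $y_0$ with the largest possible amplitude: the eigenvalue equation at $x_0$ (after a convenient gauge making $A(x_0,\cdot)=0$ and $\psi(x_0)=M$ real) gives a neighbor with $\phi(y_0)\ge(1-E/4)M>M/\sqrt2$, and \emph{at this} $y_0$ the displayed inequality is genuinely positive, producing a perpendicular neighbor $y_0+\sigma\mathbf t$ with $\phi\gtrsim M$. The remaining corner of the candidate square is $x_0+\sigma\mathbf t$, a neighbor of $x_0$; the paper handles it by a dichotomy: either all neighbors of $x_0$ already satisfy $\phi>\epsilon M$ (done), or some neighbor $\tilde y_0$ is small, in which case a second use of the eigenvalue equation at $x_0$ shows \emph{every other} neighbor satisfies $\phi\ge(2-E-\epsilon)M$, and one takes $y_0=2x_0-\tilde y_0$ instead. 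In short, the missing idea is to push the eigenvalue equation at the \emph{best} neighbor rather than at a generic one, and to organize the remaining corner by a two-case split at $x_0$.
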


\medskip

Note that the flux is required to be separated away from 0  and $\pi$,
i.e. not only the zero flux is excluded but also the
``maximal'' flux. The first exclusion is obvious since if $\rd A=0$, then the eigenfunction can be chosen
real and then clearly $J_\psi=0$. If $\rd A=\pi$ on each plaquet,
then one can choose a gauge $A$ with $A(x,x+{\bf e}_1)=0$ and $A(x,x+{\bf e}_2)= \pi x_1$
and extend it by \eqref{defofG}. In that case the Hamiltonian $H_\Lambda(A)$ is real.
In particular, the eigenfunctions are real for the maximal flux, hence
the current vanishes in this case as well.

\bigskip
\noindent
{\bf  Proof of Proposition \ref{prop51}.} Using Lemma  \ref{currentasderiv}
  we have  
$$
|J_\psi(a)|^2 \leq  2  c_a    \langle Y_{f_a} H \rangle_\psi^2      +   2  c_{\overline{a}}
\langle Y_{f_{\overline{a}}} H \rangle_\psi^2
$$
Observing that each square gives rise to 8 directed edges, after summing up
this inequality for all $a\in \AR_\Lambda$, we find  
$$
32 \sum_{f \in \FF_\Lambda} \langle Y_f H \rangle_\psi^2 \geq \sum_{a \in \AR_\Lambda} |J_\psi(a)|^2 .
$$
The proposition now 
follows from  Lemma \ref{lemma:lowerreg}.
\qed

\section{Proof of the regularity lemma: Lemma  \ref{lemma:lowerreg}}
\label{sec:proofofreg}

Let $\psi \in \ell^2(\Lambda)$ be an eigenfunction of $H_\Lambda(A)$ with 
energy $E$. If $\psi(x)\ne 0$,
we write 
\be \label{eq:polar}
\psi(x) = e^{i \lambda(x)} |\psi(x)|
\ee
for some real function $\lambda(x) \in \T$. Then the current is 
\be \label{eq:polarcur}
J_\psi(a) =   2|\psi(a_i)| |\psi(a_t)| \sin(\varphi_a),
\ee
with $\varphi_a := A(a) + \lambda(a_t) - \lambda(a_i)$. 
The goal will be to find a unit square $Q \in \FF_{\Lambda}$ such that
  $\inf_{x \in Q} |\psi(x)|$ 
is bounded from below.  Using that the magnetic flux through $Q$
is separated away from zero and $\pi$,
one can  then show that 
there is  a nonvanishing current along the boundary of that square.

To this end let $x_0 \in \Lambda$ be a point where $\psi$ attains its maximum absolute value,
\be
|\psi(x_0)| = M := {\rm sup}_{x \in \Lambda} | \psi(x) | .
\label{psimax}
\ee
In this section we use the convention that $\psi(x) =0$ if $x \notin \Lambda$.

\begin{lemma} \label{lem:decay1} Let the energy $E$ of the eigenfunction $\psi$ 
satisfy
$0 \leq E \leq 4$. Then the  following statements hold.

\begin{itemize}
\item[(a)] We have 
$$
\max_{\substack{ y \in \Lambda \\  |y-x_0|=1}} |\psi(y)| \geq (1- E/4) M.
 $$
\item[(b)] Suppose $\tilde{y}_0 \in \Lambda$
is a  nearest neighbor of $x_0$  with   $|\psi(\tilde{y}_0)| \leq \epsilon M $
for some $\epsilon\ge 0$.
Then
$$
\min_{\substack{ y \in \Lambda \setminus \{ \tilde{y}_0 \} \\ |y-x_0|=1 }} | \psi(y) | \geq 
(2 - E - \epsilon ) M .
$$

\item[(c)]  Suppose   $y_0 \in \Lambda$
is a  nearest neighbor of $x_0$  with   $|\psi(y_0)| \geq  \kappa M $ for some $\kappa\ge0$.
Then
$$
\left\{ (4 -  E )\kappa -2 \right\}  M  \leq 
| \psi(  y_0 + {\bf t} )| + | \psi(  y_0 - {\bf t} )| ,
$$   
where ${\bf t}$ is a unit vector orthogonal to  $y_0  -x_0$.
\end{itemize}
\end{lemma}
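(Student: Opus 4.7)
All three parts are consequences of the pointwise eigenvalue equation
$$
(4-E)\psi(x) \;=\; \sum_{\substack{y\in\Lambda\\|y-x|=1}} e^{iA(x,y)}\psi(y),
$$
which is just a rewriting of $H_\Lambda(A)\psi = E\psi$ under simple boundary conditions. For (a) and (b) I would apply this identity at $x=x_0$; for (c) at $x=y_0$. The only tool I need beyond this is the triangle inequality, using that $M=|\psi(x_0)|$ is the global supremum and that any lattice point has at most four nearest neighbors in $\Lambda$.

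For (a), apply the eigenvalue equation at $x_0$ and take absolute values. Since $4-E\ge 0$,
$$
(4-E)M \;\le\; \sum_{\substack{y\in\Lambda\\|y-x_0|=1}}|\psi(y)| \;\le\; 4\max_{\substack{y\in\Lambda\\|y-x_0|=1}}|\psi(y)|,
$$
because $x_0$ has at most four neighbors in $\Lambda$. Dividing by $4$ gives the claim; note that if $x_0$ sits at a corner or edge of $\Lambda$ the bound is only stronger.

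For (b), fix any neighbor $y_1\in\Lambda$ of $x_0$ with $y_1\ne\tilde y_0$, and isolate it in the eigenvalue equation at $x_0$:
$$
e^{iA(x_0,y_1)}\psi(y_1) \;=\; (4-E)\psi(x_0) \;-\!\!\! \sum_{\substack{y\in\Lambda\setminus\{y_1\}\\|y-x_0|=1}} \!\!\! e^{iA(x_0,y)}\psi(y).
$$
Applying the reverse triangle inequality, the right-hand side is at least $(4-E)M$ minus at most $\epsilon M$ (from $\tilde y_0$) minus the contribution from at most two remaining neighbors, each bounded by $M$. This yields $|\psi(y_1)|\ge (2-E-\epsilon)M$ in the worst case that $x_0$ has all four neighbors in $\Lambda$; boundary positions of $x_0$ only improve the bound.

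For (c), apply the eigenvalue equation at $y_0$. The lattice neighbors of $y_0$ are $x_0$, the point $2y_0-x_0$ opposite to $x_0$, and the two transverse points $y_0\pm{\bf t}$. By the triangle inequality and $|\psi(\cdot)|\le M$,
$$
(4-E)|\psi(y_0)| \;\le\; |\psi(x_0)| + |\psi(2y_0-x_0)| + |\psi(y_0+{\bf t})| + |\psi(y_0-{\bf t})| \;\le\; 2M + |\psi(y_0+{\bf t})|+|\psi(y_0-{\bf t})|,
$$
and using $|\psi(y_0)|\ge\kappa M$ and rearranging gives the stated inequality. (Neighbors that lie outside $\Lambda$ contribute $0$ by the convention $\psi\equiv 0$ off $\Lambda$, which only helps.) There is no real obstacle here; the whole lemma is a bookkeeping exercise with the eigenvalue equation, and the only point requiring minor care is keeping track of the maximal possible number of neighbors so that the stated universal constants come out correctly.
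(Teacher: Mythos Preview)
Your proof is correct and follows essentially the same strategy as the paper: apply the eigenvalue equation at $x_0$ for (a) and (b), at $y_0$ for (c), and extract the bounds by elementary inequalities. The only cosmetic difference is that for (a) and (b) the paper first normalizes the phase so that $\psi(x_0)=M$ and gauges away $A(x_0,\cdot)$, then argues with real parts (using $\operatorname{Re}[\psi(x_0)-\psi(z)]\ge 0$), whereas you work directly with moduli and the triangle inequality; both arguments give the same constants.
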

\begin{proof} 
 Adjusting the phase of the 
eigenvector we can assume that $\psi(x_0) = M$. 
Since the statements of the lemma are gauge invariant,
for notational simplicity we may choose for (a) and (b) a gauge, such that 
$A(x_0,y)=0$ for all $y$ with  $|x_0-y|=1$. 
The statement (a) follows from  the eigenvalue equation 
$$
\sum_{\substack{ y \in \Lambda \\ |y-x_0|=1}} \psi(y) = (4 - E ) \psi(x_0) .
$$
and by
taking real parts
$$
4 \max_{\substack{y \in \Lambda \\ |y-x_0|=1 }} {\rm Re} \psi(y) \geq ( 4 - E) \psi(x_0).  
$$
For part   (b), we
 fix  $y \in \Lambda \setminus \{ \tilde{y}_0 \} $ such that $|y - x_0 |=1$. 
 Then  by the eigenvalue equation 
\be \label{eq:eigen}
\psi(y) = ( 2  - E) \psi(x_0)   - \psi( \tilde{y}_0 )
 +    2 \psi(x_0) - \sum_{\substack{z \in \Lambda \setminus \{\tilde{y}_0 , y \} \\ |x_0-z|=1 }}
   \psi(z) .
\ee
Now taking the real part of both sides we find using that  ${\rm Re} [\psi(x_0)-\psi(z)] \geq 0$, 
$$
{\rm Re}\, \psi(y) \geq (2 - E) \psi(x_0)  - \epsilon M .
$$
This yields (b). 
To prove (c), we choose a gauge such that $A(y_0,y)=0$ for all $|y_0 - y|=1$.
By the eigenvalue equation at $y_0$
$$
(4 - E) \psi( y_0) = \psi(x_0) + \psi(  2 y_0  -x_0 ) + 
\psi(  y_0 + {\bf t}) +  \psi( y_0 - {\bf t}   ) .
$$
Hence by the triangle inequality 
$$
(4 - E ) | \psi( y_0)|  \leq  2 M + | 
\psi(  y_0 - {\bf t} )|  + | \psi(  y_0 + {\bf t} )| ,
$$
and (c)  now follows. 
\end{proof}

\bigskip

With the above  lemma we can show  the following proposition. 

\begin{proposition} \label{propsquare}   Let $E^* <   E_{\rm crit}= 4 - \sqrt{8} $.
 There exists a positive number $c = c(E^*)$ such that for any 
eigenfunction $\psi$ with eigenvalue $E \leq E^*$ there 
exists a cube $Q \in \FF_\Lambda$  such that 
\be \label{propineqQ}
{\rm min}_{x \in Q} |\psi(x)| \geq  c M .
\ee
\end{proposition}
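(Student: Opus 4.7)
The plan is to iterate parts (a), (b), (c) of Lemma \ref{lem:decay1} to propagate a lower bound on $|\psi|$ from the maximum point $x_0$ (normalized so that $\psi(x_0)=M$) to all four corners of some plaquet $Q \in \FF_\Lambda$. The threshold $E_{\rm crit} = 4 - \sqrt{8}$ appears as the smaller root of the polynomial $(4-E)(1-E/4) - 2 = 2 - 2E + E^2/4$, which is the sign-determining quantity when one chains (a) and (c). I write the argument under the simplifying assumption that $x_0$ is interior to $\Lambda$; the boundary case should be handled by a similar combinatorial case split, exploiting the convention $\psi = 0$ outside $\Lambda$ imposed by the simple boundary conditions.

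By Lemma \ref{lem:decay1}(a) I fix a unit vector $s$ so that $y_0 := x_0 + s$ satisfies $|\psi(y_0)| \geq \kappa_1 M$, with $\kappa_1 := 1 - E/4$. Applying Lemma \ref{lem:decay1}(c) at $y_0$ with perpendicular unit vector $\mathbf{t}$ yields
\[
|\psi(y_0 + \mathbf{t})| + |\psi(y_0 - \mathbf{t})| \;\geq\; \bigl((4-E)\kappa_1 - 2\bigr)\,M \;=\; \bigl(2 - 2E + E^2/4\bigr)\,M,
\]
which is strictly positive precisely because $E \leq E^* < E_{\rm crit}$. Fixing the sign of $\mathbf{t}$ I may assume $|\psi(y_0 + \mathbf{t})| \geq c_2 M$ for some $c_2 = c_2(E^*) > 0$. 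Three corners of the plaquet $Q_1 := \{x_0,\,y_0,\,x_0 + \mathbf{t},\,y_0 + \mathbf{t}\}$ are thereby bounded below by constant multiples of $M$; the only remaining vertex is $x_0 + \mathbf{t}$.

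To handle it, fix a small $\epsilon_0 > 0$ depending on $E^*$. If $|\psi(x_0 + \mathbf{t})| \geq \epsilon_0 M$, take $Q = Q_1$ and stop. Otherwise Lemma \ref{lem:decay1}(b), applied with $\tilde{y}_0 = x_0 + \mathbf{t}$, gives $|\psi(x_0 + s')| \geq \kappa_2 M$ for every other neighbor direction $s' \in \{s, -s, -\mathbf{t}\}$, with $\kappa_2 := 2 - E - \epsilon_0$. A second application of Lemma \ref{lem:decay1}(c), now at $x_0 - \mathbf{t}$ with perpendicular direction $\pm s$, yields
\[
|\psi(y_0 - \mathbf{t})| + |\psi(x_0 - s - \mathbf{t})| \;\geq\; \bigl((4-E)\kappa_2 - 2\bigr)\,M \;=\; \bigl(6 - 6E + E^2 - (4-E)\epsilon_0\bigr)\,M.
\]
The smaller root of the quadratic $6 - 6E + E^2$ is $3 - \sqrt{3}$, which strictly exceeds $E_{\rm crit}$, so for $\epsilon_0$ sufficiently small the right-hand side is at least $c_3 M$ with $c_3 = c_3(E^*) > 0$. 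If $|\psi(y_0 - \mathbf{t})| \geq (c_3/2) M$ I take $Q := \{x_0,\,y_0,\,x_0 - \mathbf{t},\,y_0 - \mathbf{t}\}$; otherwise $Q := \{x_0,\,x_0 - s,\,x_0 - \mathbf{t},\,x_0 - s - \mathbf{t}\}$. In either branch all four corners of $Q$ are bounded below by a constant multiple of $M$, with the bounds coming from $|\psi(x_0)| = M$, from part (b), and from this final use of (c).

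The main obstacle is the combinatorial bookkeeping of the case split: the second use of (c) must be anchored at $x_0 - \mathbf{t}$ rather than at $x_0 \pm s$, because only with that choice do both outcomes of the $\pm$ alternative, namely $y_0 - \mathbf{t}$ and $x_0 - s - \mathbf{t}$, happen to be the fourth corner of a plaquet whose other three vertices have already been controlled via (b). The threshold $E_{\rm crit}$ is sharp for the first chaining of (a) and (c); it is a convenient feature that the second quadratic $6 - 6E + E^2$ remains positive well past $E_{\rm crit}$, so that no further restriction on $E^*$ is needed to close the two-step scheme.
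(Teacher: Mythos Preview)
Your proof is correct and uses essentially the same ingredients as the paper's, namely chaining parts (a)/(b) with (c) of Lemma~\ref{lem:decay1} and checking that the two quadratics $2-2E+E^2/4$ and $6-6E+E^2$ stay positive for $E<E_{\rm crit}$. The organisation differs slightly: the paper makes the case split \emph{before} any use of (c), distinguishing whether \emph{all} neighbours of $x_0$ are $>\epsilon M$ (Case~1) or some neighbour $\tilde y_0$ is small (Case~2). In Case~1 the fourth corner $x_0+\sigma{\bf t}$ is automatically controlled by the case hypothesis; in Case~2 one sets $y_0=2x_0-\tilde y_0$ from the outset and (b) controls $x_0+\sigma{\bf t}$. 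This avoids your potentially redundant first application of (c) and the final sub-split on which of $y_0-{\bf t}$, $x_0-s-{\bf t}$ is large; otherwise your second case is identical to the paper's Case~2 with $\tilde y_0=x_0+{\bf t}$. Concerning the boundary: neither argument needs a separate treatment, since for $E\le E^*<2-\epsilon$ the eigenvalue equation at $x_0$ together with the convention $\psi=0$ off $\Lambda$ forces $x_0$ to be non-corner and the putative small neighbour to be tangential to the boundary, so that the opposite neighbour (your $x_0-{\bf t}$, the paper's $y_0$) stays in $\Lambda$.
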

\begin{proof} Let $\epsilon =1/10$ and recall the definition
of $M$ and $x_0$ from \eqref{psimax}. We consider the following two cases.

\bigskip

\noindent
\underline{Case 1:} ${\rm min}_{y \in \Lambda : |y -x_0| = 1} |\psi(y)| >  \epsilon M $.

\bigskip

By  Lemma \ref{lem:decay1} (a)  there exists a nearest neighbor 
$y_0 \in \Lambda$ of $x_0$ such that
$$
 |\psi(y_0)| \geq (1- E^* /4) M >0.
$$
Then by  Lemma \ref{lem:decay1} (c) (and using the notation introduced there) 
there exists a $\sigma \in \{-1,1\}$ such that 
$$
|\psi(y_0 + \sigma {\bf t})| 
 \geq \frac{1}{2} \left\{ (4 - E^*) (1 - E^*/4) -2 \right\} M .
$$
Thus choosing the square 
$Q =\{ x_0, x_0 + \sigma {\bf t}, y_0, y_0 + \sigma {\bf t}\}$,
the estimate \eqref{propineqQ} follows from the choice of $E^*< 4-\sqrt{8}$. 

\bigskip

\noindent
\underline{Case 2:} ${\rm min}_{y \in \Lambda : |y -x_0| = 1} |\psi(y)| \leq   \epsilon M$.

\bigskip

Let $\tilde{y}_0 \in \Lambda$ be the nearest neighbor of $x_0$ with  
$|\psi(\tilde{y}_0) | =  {\rm min}_{y \in \Lambda : |y -x_0| = 1} |\psi(y)|$. Let $y_0 = 2 x_0 - \tilde{y}_0$. Then by  Lemma \ref{lem:decay1} (b) and (c)
there exists a $\sigma \in \{-1,1\}$ such that 
$$
|\psi(y_0 + \sigma {\bf t}) |
 \geq \frac{1}{2} \left\{ (4 - E^*) (2 - E^* - \epsilon ) -2 \right\} M .
$$
Thus choosing the square 
$Q =\{ x_0, x_0 + \sigma {\bf t}, y_0, y_0 + \sigma {\bf t}\}$,
the estimate \eqref{propineqQ} 
follows from  Lemma \ref{lem:decay1} (b) and the choice of $E^*$ and $\epsilon$.
\end{proof}

\bigskip

With the above Proposition  we can show   Lemma \ref{lemma:lowerreg} using 
that a nonzero flux produces a current.

\bigskip

\noindent
{\bf Proof of Lemma \ref{lemma:lowerreg}}.
  Since $\psi$ is $\ell^2$-normalized,
clearly, $|\psi(x_0)| \geq 1/L$, where $x_0$ is defined by \eqref{psimax}.  
 By Proposition   \ref{propsquare}
there exists  a unit square $Q \in \FF_\Lambda$  such
 that for a constant $c$ (depending on $E^*$), we have 
$$
\inf_{x,y\in Q} |\psi(x) \psi(y)| \geq c L^{-2} .
$$
With respect to   the parameterization  \eqref{eq:polar} we have by   \eqref{eq:polarcur}
$$
\sum_{a \in \partial Q} | J_\psi(a)|^2 \geq c^2 L^{-4} \sum_{a \in \partial Q} \sin^2(\varphi_a) .
$$
This 
will imply the lemma using  that for some positive  constant $c$  (depending on $b$)
\be
 \sum_{a \in \partial Q} \sin^2(\varphi_a)  \geq c   \label{eq:lowboundonmag} .
\ee
To show this, we first note that 
$ \omega_Q = \sum_{a \in \partial Q} \varphi_a  \in  \T_{b}$. 
We will show that this implies that 
there is an  $a \in \partial Q$ such that 
$\varphi_a \in \T_{b/8}$. The estimate \eqref{eq:lowboundonmag} will then follow since
 $\sin^2(\cdot)$ is bounded from below on $\T_{b/8}$ by a positive constant (depending
on $b$). 
Suppose that there would not be an  $a \in \partial Q$ such that 
$\varphi_a \in \T_{b/8}$. Then $\varphi_a = n_a \pi + b_a$ with $n_a \in  \Z$ and 
$|b_a| \leq  b/8$.  In that case $\omega_Q = n \pi + \tilde{b}$,
with $n \in \Z$ and $|\tilde{b}| \leq b/2$. This implies
$\omega_Q \notin \T_b$, which is a contradiction.
\qed

\thebibliography{hh}

\bibitem{CL} Carmona R., Lacroix J.:
Spectral theory of random Schr\"odinger operators.
Probability and its Applications. Birkh\"auser Boston, Inc., Boston, MA, 1990.

\bibitem{CH} Combes, J.M., Hislop, P.: {\em Landau Hamiltonians
with random potentials: localization and the
density of states.} Commun. Math. Phys. {\bf 177}, 603--629 (1996)

\bibitem{EH2} Erd{\H o}s, L., Hasler, D.: {\em 
Wegner estimate and Anderson localization for random magnetic fields.} {\tt  arXiv:1012.5185}.

\bibitem{GHK}  Ghribi, F.,  Hislop, P.D., Klopp, F.: {\em 
Localization for Schr\"odinger operators with random vector potentials.} 
 Adventures in mathematical physics,  123--138, Contemp. Math., {\bf 447}, Amer. Math. Soc., Providence, RI, 2007.

\bibitem{HK} Hislop, P.D., Klopp, F.:
{\em The integrated density of states for some random
operators with non-sign definite potentials.} J. Funct. Anal.
{\bf 195}, 12--47 (2002)

\bibitem{K} Kato, T.: {\em Perturbation Theory for Linear Operators}, Springer 1980.

\bibitem{KL} Kirsch W.: {\it An Invitation to Random Schr\"odinger operators}, lecture notes.

\bibitem{KLNS}  Klopp, F., Loss, M., Nakamura, S., Stolz, G.:
{\em Localization for the random random displacement model.}
{\tt arxiv:1007.2483v1}

\bibitem{KNNN} Klopp, F.,  Nakamura, S.,  Nakano, F.,  Nomura, Y.:
{\em Anderson localization for 2D discrete Schr\"odinger operators
with random magnetic fields.}
Ann. Henri Poincar\'e {\bf 4}, 795--811 (2003)

\bibitem{Na1}  Nakamura, S.:
{\em Lifshitz tail for 2D discrete Schr\"odinger operator
with random magnetic field.}
Ann. Henri Poincar\'e {\bf 1}, 823--835 (2000)

\bibitem{S}  Stollmann, P.: Caught by Disorder, Bound
States in Random Media, Birkh\"auser, Boston, 2001

\bibitem{U}  Ueki, N.: {\it Wegner estimates and localization
for random magnetic fields.}
Osaka J. Math. {\bf 45}, 565--608 (2008)

\end{document}